\newtheorem{definition}{Definition}[section]
\newtheorem{lemma}[definition]{Lemma}
\newtheorem{proposition}[definition]{Proposition}
\newtheorem{theorem}[definition]{Theorem}
\newtheorem{remark}[definition]{Remark}
\numberwithin{equation}{section}
\def\eps{\varepsilon}
\def\tr{\mathrm{tr}}
\def\bR{\mathbb{R}}
\def\R{\mathbb{R}}
\def\bR{\mathbb{R}}
\def\eps{\varepsilon}
\def\R{\mathbb{R}}
\let\e=\varepsilon
\def\tr{\text{\rm tr}}
\def\square{\hbox{$\sqcap\kern-7pt\sqcup$}}
\def\be{\begin{equation}}
\def\ee{\end{equation}}
\def\bea{\begin{eqnarray}}
\def\eea{\end{eqnarray}}
\title{Mean-field evolution of fermions\\ with singular interaction}
\author{Chiara Saffirio   
\\ 
Institute of Mathematics, University of Zurich\\
Winterthurerstrasse 190, 8057 Zurich, Switzerland}
\begin{document}

\maketitle

\begin{abstract}
We consider a system of $N$ fermions in the mean-field regime interacting though an inverse power law potential $V(x)=|x|^{-\alpha}$, for $\alpha\in(0,1]$.
We prove the convergence of a solution of the many-body Schr\"{o}dinger equation to a solution of the time-dependent Hartree-Fock equation in the sense of reduced density matrices.
We stress the dependence on the singularity of the potential in the regularity of the initial data. The proof is an adaptation of \cite{PRSS}, where the case $\alpha=1$ is treated.
\end{abstract}

\section{Introduction}

{\it Fermionic mean-field regime.} We consider a system of $N$ particles obeying the Fermi statistics, whose state is represented by a wave function $\psi_N$ lying in $L^2_a(\R^{3N})$, the space of square integrable functions antisymmetric in the exchange of particles. The Hamiltonian of the $N$ particle system is given by
\be\label{eq:H}
H_N^{\rm ext} =\sum_{j=1}^N(-\e^2\Delta_{x_j}+V_{\rm ext}(x_j))+\frac{1}{N}\sum_{i<j}^N V(x_i-x_j),
\ee
where $V_{\rm ext}$ is an external potential confining the system in a volume of order one. As $V_{\rm ext}=0$,
the evolution in time of $\psi_N$ is given by  a solution to the Cauchy problem associated to the $N$-body Schr\"{o}dinger equation, here denoted by $\psi_{N,t}$: 
\be\label{eq:SE}
\left\{\begin{array}{l}
i\e\partial_t\psi_{N,t}=\left[\sum\limits_{j=1}^N(-\e^2\Delta_{x_j})+\frac{1}{N}\sum\limits_{i<j}^N V(x_i-x_j)\right] \psi_{N,t}\\\\
\psi_{N,0}=\psi_N\in L^2_a(\R^{3N}).
\end{array}
\right.
\ee

In \eqref{eq:SE}  the choice $\e=N^{-1/3}$ ensures the kinetic and the potential energy associated to \eqref{eq:H} to be of comparable order, namely $O(N)$. 
Observe that, in contrast with the bosonic case, the mean-field scaling for fermions comes coupled with a semiclassical limit (notice that here $\e$ plays the role of $\hbar$). This makes the analysis for fermions more complicated.  Different regimes have been considered in \cite{BBPPT}, \cite{BGGM}, \cite{FK} and \cite{PP}. 
\medskip

\noindent {\it Evolution of quasi-free states.} To begin with, it is convenient to introduce the one-particle reduced density matrix of a wave function $\psi_N$ as the non-negative trace class operator 
$$\gamma_N^{(1)}=N\tr_{2\dots N}|\psi_N \rangle\langle\psi_N |.$$

As we are interested in studying the dynamics of the system as $N\to\infty$, the choice of the initial data is crucial. Relevant initial states are given by the ground states of the Hamiltonian \eqref{eq:H}.
At zero temperature the equilibrium of confined states is approximated by Stater determinants, i.e.
\be\label{eq:slater}
\psi_{\rm Slater}(x_1,\dots,x_N)=\frac{1}{\sqrt{N!}}{\rm det}(f_i(x_j))_{1\leq i,j\leq N}
\ee
where $\{f_j\}_{j=1,\dots,N}$ is an orthonormal system in the one-particle space $L^2(\R^3)$. A Slater determinant is a quasi-free state completely determined by its one-particle reduced density matrix. 
A simple computation shows that the one-particle reduced density associated to \eqref{eq:slater} is given by
$$
\omega_{N}=\sum_{j=1}^N |f_j\rangle\langle f_j|,
$$
the orthogonal projection on the $N$ dimensional linear space ${\rm Span}\{f_1,\dots,f_N\}$. 
It minimizes the Hartree-Fock energy functional
\be\label{eq:E-HF}
\mathcal{E}_{\rm HF}(\omega)=\tr\ (-\e^2\Delta+V_{\rm ext})\,\omega+\frac{1}{2N}\iint V(x-y)\,[\omega(x;x)\,\omega(y;y)-|\omega(x;y)|^2]\,dx\,dy.
\ee
As proved in \cite{B}, \cite{GS}, the Hartree-Fock theory provides a good approximation of the ground state energy. It captures not only the leading order $O(N^{7/3})$ of the ground state as already established in the Thomas-Fermi theory (see \cite{L} and \cite{LSi} for a review on the subject), but also errors smaller than $O(N^{5/3})$.    
\smallskip

\noindent {\it Convergence towards the Hartree-Fock dynamics.}  
In \cite{EESY} and \cite{BPS13} it has been shown that the evolution of  a Slater determinant approximating the ground state of the Hamiltonian \eqref{eq:H} remains still close  to a Slater determinant. Its evolved one-particle reduced density matrix  is given by a solution to the time-dependent Hartree-Fock equation
\be
\label{eq:HF}
i\e\partial_t\omega_{N,t}=[-\e^2\Delta+(V*\rho_t)-X_t\,,\,\omega_{N,t}],
\ee
that is the Euler-Lagrange equation of \eqref{eq:E-HF}. For every $x\in\R^3$ 
$$\rho_t(x)=N^{-1}\omega_{N,t}(x;x)$$
 is the density associated to the one-particle reduced density matrix $\omega_{N,t}$, $(~V~*~\rho_t~)$ represents the so-called direct term while $X_t$ is the exchange term defined through the operator kernel
$$X_t(x;y)=\frac{1}{N}V(x-y)\omega_{N,t}(x;y).$$
More precisely, in \cite{BPS13} it has been proved that the Hartree-Fock approximation holds for initial states close to a Slater determinant with a semiclassical structure, namely
$$\omega_{N,t}(x;y)\simeq N\,\varphi\left(\frac{x-y}{\e}\right)\,\psi\left(\frac{x+y}{2}\right),$$
where $\psi$ and $\varphi$ determine respectively the density of particles and the momentum distribution. Heuristically the integral kernel of $\omega_{N,t}$ varies in the direction $x-y$ on scales  $O(\e)$ and in the direction $x+y$ on scales $O(1)$. This is precisely the  structure that is expected to hold true in Slater determinants approximating equilibrium states. This reflects on the following structure
\be
\label{eq:bounds-semiclassical}
\tr\ |[x,\omega_N]|\leq CN\e\quad \quad \tr\ |[-i\e\nabla,\omega_N]|\leq CN\e.
\ee   
In \cite{BPS13} the propagation (global) in time of the bounds \eqref{eq:bounds-semiclassical} has been shown, allowing for an approximation of the many-body Schr\"{o}dinger equation by the Hartree-Fock dynamics on time scales of order one.

The extension of this result to mixed states has been proved in \cite{BJPSS}.
\medskip

\noindent {\it Mean-field in presence of singular interactions.}
When dealing with singular interactions $V(x)={1}/{|x|^\alpha}$, $\alpha\in(0,1]$, the Hamiltonian takes the form
\be\label{eq:H-sing}
H_N=\sum_{i=1}^N(-\e^2\Delta_{x_i})+\frac{1}{N}\sum_{i<j}^N\frac{1}{|x-y|^{\alpha}}.
\ee
In particular, the case $\alpha=1$  treated in \cite{PRSS} represents a system of $N$ fermions interacting through a Coulomb potential, which describes for instance the dynamics of large atoms and molecules. In this case, the choice $\e=N^{-1/3}$ is justified by a rescaling of the space variables at a scale $O(N^{-1/3})$ (the typical distance of electrons from the nucleus) as suggested by the Thomas-Fermi theory (see \cite{L}, \cite{LSi}). An analogue reasoning applies to the case of inverse power law potentials and, by appropriately scaling the time variable,  it leads to
\be\label{eq:SE-sing}
i\e\partial_t\psi_{N,t}=\sum_{i=1}^N(-\e^2\Delta_{x_i})+\frac{1}{N}\sum_{i<j}^N\frac{1}{|x-y|^{\alpha}}.
\ee   
More details on the rigorous justification of the mean-field scaling coupled to the semicalssical one in the Coulomb case can be found in \cite{PRSS}.   

\begin{theorem}\label{thm:main}
Let $\omega_N$ be a sequence of orthogonal projections on $L^2 (\bR^3)$, with $\tr \, \omega_N = N$.
Let $\omega_{N,t}$ the solution of the Hartree-Fock equation (\ref{eq:HF}) with initial data $\omega_{N,0} = \omega_N$.  We assume that 
\begin{itemize}
\item[i)]
$\tr\, (-\eps^2 \Delta) \, \omega_N \leq C N$, for a constant $C >0$ independent of $N$; 
\item[ii)] there exist $T > 0$, $p > 6/(3-2\alpha-6\delta)$ for $\delta\in(0,\frac{1}{2})$ and  $C > 0$ such that 
\begin{equation}\label{eq:ass-main} \sup_{t \in [0;T]}  \, \sum_{i=1}^3 \left[ \| \rho_{|[x_i,\omega_{N,t}]|} \|_1 + \|  \rho_{|[x_i,\omega_{N,t}]|} \|_p \right] \leq C N \eps \,,  \end{equation}
where  $\rho_{|[x_i,\omega_{N,t}]|}(z)=|[x_i,\omega_{N,t}]|(z;z)$ is the function obtained by considering the diagonal kernel of the operator $|[x_i,\omega_{N,t}]|$.
\end{itemize}

Let $\psi_N \in L^2_a (\bR^{3N})$ be such that its one-particle reduced density matrix $\gamma_{N}^{(1)}$ satisfies
\begin{equation}\label{eq:conden} \tr \, \left| \gamma^{(1)}_N - \omega_N \right| \leq C N^\beta \end{equation}
for a constant $C > 0$ and an exponent $0 \leq \beta < 1$.

Consider the evolution $\psi_{N,t}= e^{-iH_N t/\eps} \psi_N$, with the Hamiltonian \eqref{eq:H-sing} and let  
$\gamma^{(1)}_{N,t}$ be the corresponding one-particle reduced density matrix. Then, for every $\eta > 0$, there exists $C >0$ such that 
\begin{equation}\label{eq:HS-bd} 
\sup_{t \in [0;T]} \, \left\| \gamma_{N,t}^{(1)} - \omega_{N,t} \right\|_\text{HS} \leq C \left[ N^{\beta/2} + N^{(3-2\alpha)/2(3-\alpha) + \eta}  \right] 
\end{equation}
and 
\begin{equation}\label{eq:tr-bd} 
\sup_{t \in [0;T]} \, \tr \left| \gamma_{N,t}^{(1)} - \omega_{N,t} \right| \leq C \left[ N^{(1+\beta)/2} + N^{(3-2\alpha)/2(3-\alpha) + \eta}\right] . 
\end{equation}  
\end{theorem}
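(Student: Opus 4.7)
The plan is to adapt the Fock-space/Bogoliubov-transformation strategy of \cite{BPS13,PRSS} to the singular potential $V(x)=|x|^{-\alpha}$, $\alpha\in(0,1]$. First I would lift both the many-body and the Hartree-Fock dynamics to the fermionic Fock space $\mathcal{F}$ and introduce, for every $t\in[0,T]$, the particle-hole Bogoliubov transformation $R_{\omega_{N,t}}$ that maps the vacuum $\Omega\in\mathcal{F}$ to the quasi-free state with one-particle density $\omega_{N,t}$. Setting
\begin{equation*}
\xi_{N,t}=R_{\omega_{N,t}}^{*}\,e^{-iH_{N}t/\eps}\,R_{\omega_{N,0}}\,\xi_{N,0},
\end{equation*}
and choosing $\xi_{N,0}\in\mathcal{F}$ so that the closeness assumption \eqref{eq:conden} translates into $\langle\xi_{N,0},\mathcal{N}\xi_{N,0}\rangle\leq CN^{\beta}$, the standard bounds
\begin{equation*}
\|\gamma^{(1)}_{N,t}-\omega_{N,t}\|_{\mathrm{HS}}^{2}\leq C\,\langle\xi_{N,t},\mathcal{N}\xi_{N,t}\rangle,\qquad \tr\,|\gamma^{(1)}_{N,t}-\omega_{N,t}|\leq C\sqrt{N}\,\sqrt{\langle\xi_{N,t},\mathcal{N}\xi_{N,t}\rangle}
\end{equation*}
reduce \eqref{eq:HS-bd} and \eqref{eq:tr-bd} to a Gronwall-type control of the expected number of fluctuations.

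Differentiating $\langle\xi_{N,t},\mathcal{N}\xi_{N,t}\rangle$ in time and using the Bogoliubov rules together with the Hartree-Fock equation for $\omega_{N,t}$, the quadratic pieces of the generator cancel and, as in \cite{BPS13,PRSS}, one is left with quartic terms in which $|x-y|^{-\alpha}$ is integrated against expressions built from the kernels of $\omega_{N,t}$, of commutators such as $[x_{i},\omega_{N,t}]$ and $[-i\eps\nabla,\omega_{N,t}]$, and from creation/annihilation operators acting on $\xi_{N,t}$. Hypothesis (i) controls the kinetic term uniformly, while the strengthened $L^{1}\cap L^{p}$ bound \eqref{eq:ass-main} encodes the semiclassical structure of $\omega_{N,t}$ in exactly the form needed for a singular $V$.

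The main obstacle, absent in the regular case and handled only through a specific Hardy--Kato inequality in the Coulomb case $\alpha=1$ of \cite{PRSS}, is that no sharp operator inequality of Hardy--Kato type is available for general $\alpha\in(0,1]$. I would therefore estimate each dangerous quartic term by splitting the integration domain into a near region $\{|x-y|\leq r\}$ and a far region $\{|x-y|>r\}$: the far region is controlled by $r^{-\alpha}\,\|\rho_{|[x_i,\omega_{N,t}]|}\|_{1}$, while the near region is bounded via H\"older's inequality against $|x-y|^{-\alpha}$, which lies in $L^{q'}$ locally provided $\alpha q'<3$, producing a factor $\|\rho_{|[x_i,\omega_{N,t}]|}\|_{q}$. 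Combining the semiclassical scaling $\eps=N^{-1/3}$, the interpolation between the $L^{1}$ and the $L^{p}$ bounds in \eqref{eq:ass-main}, and the small margin $\delta$ needed to stay strictly subcritical, the condition $\alpha q'<3$ translates precisely into $p>6/(3-2\alpha-6\delta)$ as in hypothesis (ii). Optimizing the splitting radius $r=r(N)$ then yields a differential inequality of the form
\begin{equation*}
\frac{d}{dt}\,\langle\xi_{N,t},(\mathcal{N}+1)\xi_{N,t}\rangle\leq C\,\langle\xi_{N,t},(\mathcal{N}+1)\xi_{N,t}\rangle+C\,N^{(3-2\alpha)/(3-\alpha)+2\eta},
\end{equation*}
and this singular-integral step, required uniformly on $[0,T]$ and compatible with the particle/hole sign of the relevant commutators, is the genuinely delicate point of the proof.

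Finally, Gronwall's lemma on $[0,T]$, together with $\langle\xi_{N,0},\mathcal{N}\xi_{N,0}\rangle\leq CN^{\beta}$, gives
\begin{equation*}
\sup_{t\in[0,T]}\,\langle\xi_{N,t},\mathcal{N}\xi_{N,t}\rangle\leq C\,\bigl[\,N^{\beta}+N^{(3-2\alpha)/(3-\alpha)+2\eta}\,\bigr],
\end{equation*}
which, inserted into the Hilbert--Schmidt estimate, yields \eqref{eq:HS-bd}; the trace-norm bound \eqref{eq:tr-bd} then follows from the standard rank/Cauchy--Schwarz argument that costs an extra factor $\sqrt{N}$.
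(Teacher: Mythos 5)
Your overall skeleton (second quantization, particle--hole Bogoliubov transformation $R_{\omega_{N,t}}$, fluctuation dynamics, reduction of the Hilbert--Schmidt and trace bounds to a Gr\"onwall estimate on $\langle \xi_{N,t},\mathcal{N}\xi_{N,t}\rangle$) coincides with the paper's route, which indeed quotes these reduction steps from \cite{PRSS}. However, the genuinely delicate point --- how the singular potential is combined with the semiclassical commutator information of assumption (ii) --- is where your proposal has a real gap. The paper does \emph{not} proceed by a near/far splitting of the region $\{|x-y|\le r\}$ with H\"older against the local $L^{q'}$ integrability of $|x|^{-\alpha}$: it uses the Hainzl--Seiringer generalization of the Fefferman--de la Llave formula \eqref{eq:FDLL}, which writes $|x-y|^{-\alpha}$ as a superposition over scales $r$ and centers $z$ of \emph{products} of Gaussians $\chi_{(r,z)}(x)\chi_{(r,z)}(y)$. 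This decoupling of $x$ and $y$ is essential: it makes the quartic terms factor through operators of the form $\overline{v}_{N,t}\chi_{(r,z)}u_{N,t}$, whose trace norm is bounded by $\tr|[\chi_{(r,z)},\omega_{N,t}]|$, and only at this point can Lemma \ref{lemma:traccia} and the Hardy--Littlewood maximal inequality convert assumption (ii) into the crucial factor $N\eps$. In your scheme it is simply asserted that the far region is "controlled by $r^{-\alpha}\|\rho_{|[x_i,\omega_{N,t}]|}\|_1$" and the near region "produces $\|\rho_{|[x_i,\omega_{N,t}]|}\|_q$", but no mechanism is given by which the commutators $[x_i,\omega_{N,t}]$ appear from such a pointwise splitting; without extracting that commutator structure the quartic terms are only $O(N)$, the $1/\eps$ coming from $i\eps\partial_t$ is not compensated, and the Gr\"onwall constant blows up as $N\to\infty$.

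Two further points confirm the gap. First, your claimed identification ``$\alpha q'<3$ translates precisely into $p>6/(3-2\alpha-6\delta)$'' is not correct: $\alpha q'<3$ only forces $q>3/(3-\alpha)$, a much weaker threshold; in the paper the condition on $p$ arises from the exponent $\tfrac56-\delta$ of the maximal function in Lemma \ref{lemma:traccia} combined with integrability in $r$ of $r^{-(5/2+\alpha+3\delta-3/p)}$, which requires splitting the $r$-integral into small and large scales with two different H\"older pairs $(p,q)$, $(p',q')$. Second, the exponent $(3-2\alpha)/2(3-\alpha)$ in \eqref{eq:HS-bd} is produced by a separate treatment of the term with four annihilation operators (the term $III$), which needs its own cutoff $k$ in the scale variable, the bound $|III_2|\le CN\eps^2 k^{-\alpha-6\delta}$, the kinetic-energy/Lieb--Thirring control of $\|\rho_t\|_{L^{5/3}}$ from assumption (i), and the optimization $k=\eps^{1/(3-\alpha)}$; your proposal only gestures at ``optimizing the splitting radius'' without this structure, so the stated error exponent is not derived. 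Incidentally, the Coulomb case in \cite{PRSS} was handled by exactly this Fefferman--de la Llave strategy rather than by a Hardy--Kato inequality, so the obstacle you invoke to motivate the alternative splitting is not the actual one; the correct fix is to keep the representation \eqref{eq:FDLL} (valid for all $\alpha\in(0,1]$) and adapt the exponents, which is what the paper does.
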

\begin{remark}
Some remarks are in order.
\begin{enumerate}
\item The bounds \eqref{eq:HS-bd} and \eqref{eq:tr-bd} guarantee that the one-particle density matrix $\gamma_{N,t}^{(1)}$ is close to the solution of the Hartree-Fock equation in Hilbert-Schmidt and in trace norms as $N$ is sufficiently large. Indeed, since
 $\| \omega_{N,t} \|_\text{HS} = N^{1/2}$, $\| \gamma_{N,t}^{(1)} \|_\text{HS} = N^{1/2}$  and $\tr\, \omega_{N,t} = N$, $\tr\, \gamma_{N,t}^{(1)} = N$ , Eq.ns \eqref{eq:HS-bd} and \eqref{eq:tr-bd} asserts that the difference between $\gamma_{N,t}^{(1)}$ and $\omega_{N,t}$ is smaller than the size of each component, thus the Hartree-Fock equation is a good approximation for the many-body evolution with singular interaction. 
\item The exponent $0 \leq \beta < 1$ measures the initial number of excitations. In other words, $\beta$ measures the number of particles that are not in the Slater determinant when $t=0$.
\item The bounds are $N=\e^{-3}$ dependent. As already pointed out, it encodes the fact that the mean-field scaling is linked to a semicalssical limit in the fermionic setting. More precisely, as $N\to\infty$, the Wigner transform of a solution to the Hartree-Fock equation \eqref{eq:HF} converges (weakly) to a solution to the Vlasov equation. This statement has been proved in the case of regular interactions for pure states in \cite{BPSS}. Several results are available  in the case of mixed states, see for instance \cite{NS}, \cite{Sp} \cite{AKN},  \cite{GolsePaul}, \cite{APPP}, \cite{BGGM}. The papers \cite{MM}, \cite{LP} and \cite{FLP} deal with singular interactions (here included the Coulomb case) for mixed states. For pure states, the semiclassical limit towards the Vlasov equation with singular interaction potential is an open problem. 
\item  Assumption $ii)$ is very strong. Indeed, we ask the bound to hold true for $\omega_{N,t}$ as $t\geq 0$. However, the problem of the derivation of the Hartree-Fock equation from a system of many interacting fermions is reduced to a PDE problem. Namely, it remains to prove that the bound in assumption $ii)$, if assumed at time t=0, propagates at positive times. There is one special situation in which assumption $ii)$ is satisfied without any further requirement: the case of translation invariant Slater determinants. This is for instance the case of a system of $N$ fermions in a finite box of order one with periodic boundary conditions. In such a setting, the Hartree-Fock dynamics becomes trivial. Nevertheless it is interesting that the non trivial dynamics given by the $N$-body Schr\"{o}dinger equation can be approximated in the mean-field limit by a trivial one.
\end{enumerate}
\end{remark}

%%%%%%%%%%%%%%%%%%%%%%%%%%%%%%%%%%%%%%%%%%%%%
%%%%%%%%%%%%%%%%%%%%%%%%%%%%%%%%%%%%%%%%%%%%%
\section{Second quantization formalism}

Let us introduce the formalism of second quantization. We consider over $L^2(\R^3)$ the fermionic Fock space 
$$\mathcal{F}=\bigoplus_{n\geq 0} L^2_a(\R^{3n}).$$
On $\mathcal{F}$ we define for every $f\in L^2(\R^3)$ the creation and annihilation operators in terms of operator valued distributions $a_x^*,\ a_x$, $x\in\R^3$,
$$a^*(f)=\int f(x)\,a_x^*\,dx,\quad a(f)=\int \bar{f}(x)\,a_x\,dx.$$ 
The second quantization of an operator $O$ on $L^2(\R^3)$ is defined as
$$d\Gamma(O)=\int O(x;y)\,a_x^*\,a_y\,dx\,dy,$$
where $O(x;y)$ is the integral kernel of $O$. In particular $O={\bf 1}$ corresponds to the second quantization of the number of particle operator $\mathcal{N}$
$$\mathcal{N}=\int a_x^*\,a_x\,dx.$$ 

Given a vector in the Fock space $\Psi\in\mathcal{F}$, the one-particle reduced density matrix associated with $\Psi$ is the non-negative trace class operator on $L^2(\R^3)$ whose kernel is given by
\be\label{eq:1pdm}
\gamma_{\Psi}^{(1)}(x;y)=\langle \Psi,a_y^*\,a_x\,\Psi \rangle.
\ee 
Moreover, given $\Psi\in\mathcal{F}$ and a one-particle operator $O$ on $L^2(\R^3)$, the expectation of the second quantization of $O$ in the state $\Psi$ is given by
$$
\langle \Psi,d\Gamma(O)\,\Psi \rangle=\int O(x;y)\,\langle \Psi,a_x^*\,a_y\,\Psi\rangle\,dx\,dy=\tr\ O\,\gamma_{\Psi}^{(1)}.
$$ 
Notice that, according to this definition, the trace of the one-particle reduced density matrix \eqref{eq:1pdm} represents the expected number of particles in the state $\Psi$, namely
$$
\tr\ \gamma_{\Psi}^{(1)}=\langle \Psi,\mathcal{N}\Psi \rangle.
$$ 

The next Lemma is taken from \cite{BPS13} and collects some useful bounds. 

\begin{lemma}\label{lemma:2quant}
Let $O$ be a bounded operator on $L^2(\R^3)$. Then, for every $\Psi\in\mathcal{F}$,
\begin{equation*}\begin{split}
\langle \Psi,d\Gamma(O)\,\Psi \rangle&\leq \|O\|\langle \Psi,\mathcal{N}\Psi\rangle,\\
\|d\Gamma(O)\Psi\|&\leq \|O\|\,\|\mathcal{N}\Psi\|.
\end{split}
\end{equation*}
Moreover, if $O$ is a Hilbert-Schmidt operator, 
\begin{equation*}
\begin{split}
\|d\Gamma(O)\Psi\|&\leq\|O\|_{\rm HS}\|\mathcal{N}^{1/2}\Psi\|,\\
\left\| \int O(x;y)\,a_x\,a_y\,\Psi\,dx\,dy\,\right\|&\leq\|O\|_{\rm HS}\|\mathcal{N}^{1/2}\Psi\|,\\
\left\|\int O(x;y)\,a_x^*\,a^*_y\,\Psi\,dx\,dy\,\right\|&\leq\|O\|_{\rm HS}\|\mathcal{N}^{1/2}\Psi\|.
\end{split}
\end{equation*}
If $O$ is a trace class operator, we have
\begin{equation*}
\begin{split}
\|d\Gamma(O)\Psi\|&\leq 2\|O\|_{\tr },\\
\left\| \int O(x;y)\,a_x\,a_y\,\Psi\,dx\,dy\,\right\|&\leq 2\tr\ |O|,\\
\left\|\int O(x;y)\,a_x^*\,a^*_y\,\Psi\,dx\,dy\,\right\|&\leq 2\tr\ |O|.
\end{split}
\end{equation*}
\end{lemma}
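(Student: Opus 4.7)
The lemma collects standard Fock-space estimates; I would organize the proof by the regularity of $O$ and, in each case, decompose $O$ in its singular value (or trace-class) representation and close using the canonical anticommutation relations together with the Pauli principle.

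For the bounded case, restrict $d\Gamma(O)$ to the $n$-particle sector $L^2_a(\bR^{3n})$: there $d\Gamma(O)^{(n)} = \sum_{j=1}^n O_j$, so both statements reduce to the trivial bounds $d\Gamma(O)^{(n)}\leq n\|O\|$ and $\|d\Gamma(O)^{(n)}\|_{\rm op}\leq n\|O\|$, which, upon summation against $\|\Psi^{(n)}\|^2$ and using $\cN^{(n)}=n$, give the two inequalities. For the trace-class case, decompose $O = \sum_k \lambda_k |\phi_k\rangle\langle\psi_k|$ with $\sum_k|\lambda_k|\leq \|O\|_\tr$, use the fermionic operator bounds $\|a(f)\|_{\rm op},\|a^*(f)\|_{\rm op}\leq \|f\|$ summand by summand, and close with the triangle inequality; the factor $2$ arises from the two orderings of creation/annihilation pairs in the relevant summand.

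For the Hilbert-Schmidt case, use the SVD $O = \sum_k s_k |\phi_k\rangle\langle \psi_k|$ with $\{\phi_k\},\{\psi_k\}$ orthonormal and $\sum_k s_k^2=\|O\|_{\rm HS}^2$. Writing $d\Gamma(O) = \sum_k s_k\, a^*(\phi_k)\, a(\psi_k)$, Cauchy-Schwarz over $k$, the bound $\|a^*(\phi_k)\|_{\rm op}\leq 1$ (valid on fermionic Fock space), and $\sum_k \|a(\psi_k)\Psi\|^2\leq \langle \Psi,\cN\Psi\rangle$ give $\|d\Gamma(O)\Psi\|\leq \|O\|_{\rm HS}\|\cN^{1/2}\Psi\|$. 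For the pure $aa$ bound, expand
\begin{equation*}
\left\|\int O(x;y)\, a_x\, a_y\, \Psi\, dx\, dy\right\|^2 = \int \overline{O(x;y)}\, O(x';y')\, \langle \Psi,\, a_y^* a_x^* a_{x'} a_{y'}\, \Psi\rangle\, dx\, dx'\, dy\, dy',
\end{equation*}
apply the CAR $\{a_x^*,a_{x'}\}=\delta(x-x')$, reduce the $\delta$-piece to $\tr(O^*O\,\gamma_\Psi^{(1)})$ (bounded by $\|O\|_{\rm HS}^2\,\|\gamma_\Psi^{(1)}\|_{\rm op}\leq \|O\|_{\rm HS}^2$ via the Pauli principle), and control the remaining normal-ordered piece by $\|O\|_{\rm HS}^2\,\langle\Psi,\cN\Psi\rangle$ after one more CAR step and Cauchy-Schwarz. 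The $a^*a^*$ bound is symmetric by adjoint.

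The one nontrivial point is precisely this last Hilbert-Schmidt $aa/a^*a^*$ estimate: a naive Cauchy-Schwarz in $(x,y)$ would produce a $\|\cN\Psi\|$ rather than the sharper $\|\cN^{1/2}\Psi\|$, and the fermionic bound $\|\gamma_\Psi^{(1)}\|_{\rm op}\leq 1$, i.e.\ the Pauli principle, is what restores the correct exponent. All the remaining inequalities follow from routine manipulations with the CAR, Cauchy-Schwarz, and the triangle inequality.
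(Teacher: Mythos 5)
The paper does not actually prove this lemma (it is quoted from \cite{BPS13}), so the relevant comparison is with the standard argument there, which for all the Hilbert--Schmidt and trace-class bounds runs through the singular value decomposition and the fermionic bound $\|a^\#(f)\|\leq\|f\|$. Your treatment of the bounded case, the trace-class case, and the $d\Gamma(O)$ Hilbert--Schmidt bound follows exactly that route and is correct (your trace-class argument even gives constant $1$ instead of $2$). The gaps are in the two pair terms.

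For the $a_xa_y$ bound, your expansion of the square and the identification of the $\delta$-piece as $\tr\,(O^*O\,\gamma^{(1)}_\Psi)$ are fine, but with the bounds you assign the pieces you do \emph{not} reach the stated inequality: estimating the $\delta$-piece by $\|O\|_{\rm HS}^2\|\gamma^{(1)}_\Psi\|\leq\|O\|_{\rm HS}^2$ produces an additive $\|O\|_{\rm HS}^2\|\Psi\|^2$, so your route yields $\|O\|_{\rm HS}\,\|(\mathcal{N}+1)^{1/2}\Psi\|$ rather than $\|O\|_{\rm HS}\,\|\mathcal{N}^{1/2}\Psi\|$. Either bound that piece instead by $\tr\,(O^*O\,\gamma^{(1)}_\Psi)\leq\|O\|^2\,\tr\,\gamma^{(1)}_\Psi\leq\|O\|_{\rm HS}^2\langle\Psi,\mathcal{N}\Psi\rangle$ (losing only a factor $\sqrt{2}$), or, cleaner, reuse your SVD argument: with $O=\sum_k s_k|\phi_k\rangle\langle\psi_k|$ one has $\int O(x;y)a_xa_y\,dx\,dy=\sum_k s_k\,a(\bar\phi_k)\,a(\psi_k)$, and $\|a(\bar\phi_k)\|\leq 1$ together with Cauchy--Schwarz in $k$ and $\sum_k\|a(\psi_k)\Psi\|^2\leq\langle\Psi,\mathcal{N}\Psi\rangle$ gives the bound with constant $1$ and no CAR manipulation at all. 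More seriously, the claim that ``the $a^*a^*$ bound is symmetric by adjoint'' is not a proof: for $B=\int O(x;y)a_xa_y\,dx\,dy$ one has $\|B^*\Psi\|^2=\|B\Psi\|^2+\langle\Psi,[B,B^*]\Psi\rangle$, and the commutator contributes a term of order $\|O\|_{\rm HS}^2\|\Psi\|^2$ that adjoint symmetry discards. Indeed the inequality as stated with $\mathcal{N}^{1/2}$ is false on the vacuum: for $O(x;y)=\phi(x)\chi(y)$ with $\phi\perp\chi$ orthonormal, the left-hand side equals $1$ while the right-hand side vanishes. The correct statement, as in \cite{BPS13}, carries $(\mathcal{N}+1)^{1/2}$ for the $a^*a^*$ term (harmless for the applications in Section 4, where only the trace-class operator-norm bound on $B_{r,z}$ is used); your proof should either establish that version or explicitly note the slip in the statement rather than appeal to a symmetry that does not hold.
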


We introduce the second quantization of the Hamiltonian \eqref{eq:H-sing} as the self-adjoint operator $\mathcal{H}_N$ whose restriction to the $n$-particle sector of the Fock space $\mathcal{F}$ is 
\be\label{eq:H-FS}
\mathcal{H}_N|_{\mathcal{F}_n}=\sum_{j=1}^n-\e^2\Delta_{x_j}+\frac{1}{N}\sum_{i<j}^n \frac{1}{|x_i-x_j|^{\alpha}}.
\ee
In particular, observe that the Hamiltonian $\mathcal{H}_N$ coincides with the Hamiltonian \eqref{eq:H-sing} when restricted to $\mathcal{F}_N$, the $N$-particle sector of the Fock space $\mathcal{F}$. Therefore, the dynamics of an initial data in $\mathcal{F}$ with $N$ particles coincides with the evolution given by \eqref{eq:SE}, where $V(x)=1/|x|^\alpha$.
%$$
%\mathcal{H}_N=\e^2\int \nabla_x a_x^*\nabla_x a_x\,dx+\frac{1}{N}\int V(x-y)\,a_x^*a_y^*a_ya_x\,dx\,dy. 
%$$

To define the second quantization of a Slater determinant \eqref{eq:slater}, we introduce $\{f_j\}_{j=1,\dots,N}$, an orthonormal system in $L^2(\R^3)$, and the vacuum $\Omega$. A Slater determinant on the Fock space $\mathcal{F}$ is given by
\be\label{eq:slater-2q}
a^*(f_1)\dots a^*(f_N)\Omega=\left\{ 0,\dots,0,\frac{1}{\sqrt{N!}}{\rm det}(f_i(x_j))_{1\leq i,j\leq N},0,\dots,0 \right\}.
\ee
These states enjoy a remarkable structure: they can be obtained as the action of a Bogoliubov transformation on the vacuum $\Omega$. More precisely, in our context a fermionic Bogoliubov trasformation is a unitary linear map 
$$\omega:L^2(\R^3)\oplus L^2(\R^3)\to L^2(\R^3)\oplus L^2(\R^3)$$ 
of the form
\begin{equation*}
\omega=\left(\begin{array}{ll}
u &\quad \bar{v}\\
v &\quad \bar{u}
\end{array}
\right)
\end{equation*} 
where $u,\,v:L^2(\R^3)\to L^2(\R^3)$ are linear maps such that 
\begin{equation*}
\begin{split}
& u^*u+v^*v=1,\\
& u^*\bar{v}+v^*\bar{u}=0.
\end{split}
\end{equation*}
We say that the Bogoliubov transformation $\omega$ is implementable on the fermionic Fock space $\mathcal{F}$ if there exists a unitary operator $\mathfrak{R}_\omega:\mathcal{F}\to\mathcal{F}$ such that, for every $f,\,g\in L^2(\R^3)$,
$$\mathfrak{R}_\omega^*(a(f)+a^*(\bar{g}))\mathfrak{R}_\omega=a(\omega\,f)+a^*(\overline{\omega\,g}).$$
The Shale-Stinespring condition (see \cite{Solovej}) ensures that, if $v$ is an Hilbert-Schmidt operator, then $\omega$ is indeed implementable. Thus $\mathfrak{R}_{\omega}$ is the implementor of the Bogoliubov transformation $\omega$. 

We consider the one-particle reduced density matrix associated with the Slater determinant \eqref{eq:slater-2q}:
$$\omega_N=\sum_{i=1}^N|f_i\rangle\langle f_i|,$$
the orthogonal projection on the $N$-dimensional space given by ${\rm Span}\{f_1,\dots,f_N\}$, where $\{f_i\}_{i=1,\dots,N}$ are the orbitals of the Slater determinant. Moreover, there exists a unitary operator $\mathfrak{R}_{\omega_N}$, implementor of a Bogoliubov transformation which generates a Slater determinant with orbitals $\{f_i\}_{i=1,\dots,N}$, namely
\be\label{eq:slater-fock}
\mathfrak{R}_{\omega_N}\Omega=a^*(f_1)\dots a^*(f_N)\,\Omega,
\ee
such that for every $g\in L^2(\R^3)$
\be\label{eq:property-slater}
\mathfrak{R}^*_{\omega_N}\,a(g)\,\mathfrak{R}_{\omega_N}=a(u_N\,g)+a^*(\bar{v}_N\,\bar{g})
\ee
where $ u_N=1-\omega_N$, $v_N=\sum_{i=1}^N|\bar{f_i}\rangle\langle f_i |$.
In other words, $\mathfrak{R}_{\omega_N}$ can be seen as the particle-hole transformation 
\begin{equation*}
\mathfrak{R}_{\omega_N}\,a^*(f_j)\,\mathfrak{R}^*_{\omega_N}=\left\{
\begin{array}{ll}
a(f_j), & j\leq N\\
a^*(f_j), & j>N
\end{array}
\right.
\end{equation*} 
where $\{f_j\}_{j=1,\dots,\infty}$ is the orthonormal basis of $L^2(\R^3)$ obtained by completing the orthonormal system $\{f_j\}_{j=1,\dots,N}$. 

Eq.ns \eqref{eq:slater-fock} and \eqref{eq:property-slater} are convenient representations of Slater determinants and the main reason to look at the Fock space in this context. Indeed, $\mathfrak{R}_{\omega_N}$ describes fluctuations in the Slater determinant with reduced density $\omega_N$: the Slater determinant is the new vacuum after the action of $\mathfrak{R}_{\omega_N}$; creation and annihilation operators act creating a particle outside the Slater determinant and annihilating  a particle inside the Slater determinant. 

%%%%%%%%%%%%%%%%%%%%%%%%%%%%%%%%%%%%%%%%%%%%%
%%%%%%%%%%%%%%%%%%%%%%%%%%%%%%%%%%%%%%%%%%%%%
\section{Sketch of the proof of Theorem \eqref{thm:main}}

The proof of Theorem \eqref{thm:main} is a direct consequence of the following Proposition (an adaptation of Theorem 2.2 in \cite{PRSS})

\begin{proposition}\label{prop:thm2}
Let $\omega_N$ be a sequence of orthogonal projections on $L^2 (\bR^3)$, with $\tr \, \omega_N = N$ and $\tr\, (-\eps^2 \Delta) \, \omega_N \leq C N$. Let $\omega_{N,t}$ denote the solution of the Hartree-Fock equation (\ref{eq:HF}) with initial data $\omega_{N,0} = \omega_N$.  We assume that there exist $T > 0$, $p > 6/(3-2\alpha-6\delta)$ for $\delta\in(0,\frac{1}{2})$ and $C > 0$ such that 
\begin{equation}\label{eq:assFS} \sup_{t \in [0;T]} \sum_{i=1}^3 \left[ \| \rho_{|[x_i, \omega_{N,t}]|} \|_1 +  \| \rho_{|[x_i,\omega_{N,t}]|} \|_p \right]  \leq C N\eps .
\end{equation}
Let $\xi_N \in \mathcal{F}$ be a sequence with 
\[ \langle \xi_N, \mathcal{N} \xi_N \rangle \leq C N^\beta \]  
for an exponent $\beta$, with $0 \leq \beta < 1$.  
We consider the evolution 
\[ \Psi_{N,t} = e^{-i\mathcal{H}_N t/\eps} R_{\omega_N} \xi_N \]
and denote by $\gamma^{(1)}_{N,t}$ the one-particle reduced density of $\Psi_{N,t}$, as defined in \eqref{eq:1pdm}. Then for all $\eta > 0$ there is a constant $C > 0$ such that 
\[ \sup_{t \in [0;T]} \left\| \gamma_{N,t}^{(1)} - \omega_{N,t} \right\|_\text{HS} \leq C \left[ N^{\beta/2} + N^{(3-2\alpha)/2(3-\alpha) + \eta} \right] \]
and 
\[ \sup_{t \in [0;T]} \tr \, \left| \gamma_{N,t}^{(1)} - \omega_{N,t} \right| \leq C \left[ N^{(1+\beta)/2} + N^{(3-2\alpha)/2(3-\alpha) + \eta}\right] \, .\]
\end{proposition}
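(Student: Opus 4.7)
I follow the strategy of \cite{PRSS}, introducing
\[
\mathcal{U}_N(t;s) = \mathfrak{R}^*_{\omega_{N,t}}\, e^{-i\mathcal{H}_N(t-s)/\eps}\, \mathfrak{R}_{\omega_{N,s}},\qquad \xi_{N,t} := \mathcal{U}_N(t;0)\,\xi_N .
\]
Expanding $\mathfrak{R}^*_{\omega_{N,t}}\,a^*_y a_x\,\mathfrak{R}_{\omega_{N,t}}$ via \eqref{eq:property-slater} and applying Lemma~\ref{lemma:2quant} reduces both target bounds to a single estimate on $\langle \xi_{N,t},\mathcal{N}\,\xi_{N,t}\rangle$, via the standard relations
\[
\|\gamma^{(1)}_{N,t}-\omega_{N,t}\|_{\mathrm{HS}} \leq C\,\langle \xi_{N,t},\mathcal{N}\xi_{N,t}\rangle^{1/2},\qquad \tr\bigl|\gamma^{(1)}_{N,t}-\omega_{N,t}\bigr| \leq C\,N^{1/2}\langle \xi_{N,t},\mathcal{N}\xi_{N,t}\rangle^{1/2} + C\,\langle \xi_{N,t},\mathcal{N}\xi_{N,t}\rangle .
\]

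\textbf{Growth of the number operator.} Let $\mathcal{L}_N(t)$ denote the generator of $\mathcal{U}_N(t;0)$. Conjugating $\mathcal{H}_N$ by $\mathfrak{R}_{\omega_{N,t}}$ and combining with $(i\eps\partial_t\mathfrak{R}^*_{\omega_{N,t}})\mathfrak{R}_{\omega_{N,t}}$, the number-conserving pieces (kinetic energy and the direct/exchange Hartree--Fock potentials) drop out of $[\mathcal{L}_N(t),\mathcal{N}]$, so one is left with a finite sum of ``fluctuation'' terms of schematic form
\[
\frac{1}{N}\iint\frac{1}{|x-y|^\alpha}\,\bar v_{N,t}(z;x)\,u_{N,t}(w;y)\,a^*_z a^*_w a_y a_x\,\rd z\,\rd w\,\rd x\,\rd y + \text{h.c.},
\]
with $u_{N,t}=1-\omega_{N,t}$ and $v_{N,t}$ the off-diagonal block of the associated Bogoliubov map, plus two analogous contributions. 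These drive $\partial_t\langle \xi_{N,t},\mathcal{N}\xi_{N,t}\rangle = (i/\eps)\langle\xi_{N,t},[\mathcal{L}_N(t),\mathcal{N}]\xi_{N,t}\rangle$, and each of them must be bounded by $C\langle \xi,\mathcal{N}\xi\rangle$ plus a controlled remainder, \emph{uniformly} in $\eps$.

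\textbf{Handling the singular interaction.} This is the core step, and the point at which Section~4 of \cite{PRSS} has to be adapted from $\alpha=1$ to general $\alpha\in(0,1]$. For each term above I would: (i) commute $|x-y|^{-\alpha}$ past one $a^\#$ to produce a factor proportional to $[x_i,\omega_{N,t}]$; (ii) use the Kato--Herbst-type inequality $|x|^{-\alpha}\leq C_\alpha(-\Delta)^{\alpha/2}$ to replace the singular kernel by a fractional Laplacian acting on density kernels built from $\omega_{N,t}$; (iii) apply Hardy--Littlewood--Sobolev and H\"older to the resulting density integrals, invoking \emph{both} parts of hypothesis~\eqref{eq:assFS}. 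The technical constraint $p>6/(3-2\alpha-6\delta)$ is precisely what makes the dual Sobolev exponent admissible, $\delta\in(0,1/2)$ being a free parameter that will be sent to $0$ and produces the arbitrarily small loss $\eta$ in the final exponent. The dangerous $1/\eps$ from the generator is absorbed by the semiclassical $\eps$ in $\|\rho_{|[x_i,\omega_{N,t}]|}\|_{1}+\|\rho_{|[x_i,\omega_{N,t}]|}\|_p\leq CN\eps$, so no inverse power of $\eps$ survives. After optimizing the interpolation, one arrives at the differential inequality
\[
\bigl|\partial_t\langle \xi_{N,t},\mathcal{N}\xi_{N,t}\rangle\bigr| \leq C\,\langle \xi_{N,t},\mathcal{N}\xi_{N,t}\rangle + C\,N^{(3-2\alpha)/(3-\alpha)+2\eta} .
\]

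\textbf{Conclusion and main obstacle.} Grönwall on $[0,T]$, combined with $\langle \xi_N,\mathcal{N}\xi_N\rangle\leq CN^\beta$, yields
\[
\sup_{t\in[0,T]}\langle \xi_{N,t},\mathcal{N}\xi_{N,t}\rangle \leq C\bigl(N^\beta + N^{(3-2\alpha)/(3-\alpha)+2\eta}\bigr),
\]
and inserting this into the reduction of the first step produces both claimed estimates after redefining $\eta$. The principal difficulty lies in step~(iii): the naive Hardy--Littlewood--Sobolev bound for $|x|^{-\alpha}$ loses several powers of $\eps^{-1}=N^{1/3}$, and converting this loss into the claimed subcritical exponent requires the full mixed $L^1$--$L^p$ strength of \eqref{eq:assFS}, paired with a fractional Sobolev interpolation calibrated to $\alpha$; the admissibility of that interpolation is exactly what the technical condition on $p$ encodes, and recovering the Coulomb bound $N^{1/4+\eta}$ of \cite{PRSS} at $\alpha=1$ serves as a sanity check.
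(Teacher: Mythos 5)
Your outer architecture coincides with the paper's: introduce the fluctuation dynamics $\mathcal{U}_N(t)$, reduce both norms to the growth of $\langle \mathcal{U}_N(t)\xi_N,\mathcal{N}\,\mathcal{U}_N(t)\xi_N\rangle$ (this reduction is exactly the Section 2 argument of \cite{PRSS} that the paper invokes), compute the derivative of the number expectation so that only the quartic off-diagonal terms survive, and close with Gr\"onwall. Up to that point you are on the paper's track, which rests on Proposition \ref{prop:thm3}.

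The genuine gap is in your third step, which is where the whole content of the proposition lies: the differential inequality with source $N^{(3-2\alpha)/(3-\alpha)+2\eta}$ is asserted, not derived, and the mechanism you sketch is both different from the paper's and unlikely to work as stated. The paper does not ``commute $|x-y|^{-\alpha}$ past an $a^\#$'' nor use $|x|^{-\alpha}\leq C_\alpha(-\Delta)^{\alpha/2}$: a Kato--Herbst bound brings in fractional kinetic energy of the fluctuation vector, which is not controlled by $\langle\mathcal{N}\rangle$ and is not available uniformly in $\eps$. Instead, the singular kernel is written via the Hainzl--Seiringer generalization of the Fefferman--de la Llave formula \eqref{eq:FDLL}, which decouples the singular $r^{-(4+\alpha)}$ measure from smooth Gaussians $\chi_{(r,z)}$; the commutator structure then appears because $\bar v_{N,t}\chi_{(r,z)}u_{N,t}=\bar v_{N,t}[\chi_{(r,z)},\omega_{N,t}]u_{N,t}$, and its trace norm is controlled by Lemma \ref{lemma:traccia} in terms of the maximal function of $\rho_{|[x_i,\omega_{N,t}]|}$; the exponent $p$ in \eqref{eq:assFS} enters through the Hardy--Littlewood maximal inequality and the integrability in $r$ of $r^{-5/2-\alpha-3\delta+3/p}$, not through a Hardy--Littlewood--Sobolev dual exponent. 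Moreover your schematic quartic term (two creators, two annihilators) is not the dangerous one: the terms with one creator ($I$) and three creators ($II$) close linearly in $\langle\mathcal{N}\rangle$, while the purely-annihilating term ($III$, kernel $\bar v\otimes\bar v\otimes u\otimes u$) cannot, and it is exactly this term that requires splitting the $r$-integral at $k=\eps^{1/(3-\alpha)}$ together with the Lieb--Thirring/kinetic-energy bound on $\|\rho_t\|_{L^{5/3}}$ (Appendix \ref{app:kinetic}); this is where the error $N^{(3-2\alpha)/ (3-\alpha)}$-type term actually originates. Finally, $\delta$ is not a free parameter to be sent to $0$: it is fixed by the hypothesis $p>6/(3-2\alpha-6\delta)$, and the loss $\eta$ in the statement comes from that fixed $\delta$. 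As written, your ``after optimizing the interpolation one arrives at the differential inequality'' stands in for the entire quantitative core of the proof.
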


The proofs of Theorem \ref{thm:main} and Proposition \ref{prop:thm2} are immediate adaptations of Theorems 1.1 and 2.2 in \cite{PRSS} and can be found in Section 2 of \cite{PRSS}.
 The proof of Proposition \ref{prop:thm2} relies on the control on the growth of fluctuations established in Proposition \ref{prop:thm3}. To prove Proposition \ref{prop:thm3}, we use a generalized Fefferman-de La Llave representation formula for the inverse power law potential $1/|x-y|^{\alpha}$ and the estimate, stated in the following Lemma, on the trace norm of the commutator between the multiplication operator given by the smooth function $\chi_{(r,z)}(x)=e^{{|x-z|^2}/{r^2}}$ and the solution $\omega_{N,t}$ to the Hartree-Fock equation \eqref{eq:HF}:
 \begin{lemma}[Lemma 3.1 in \cite{PRSS}]
\label{lemma:traccia}
Let $\chi_{r,z} (x) = \exp (-x^2/r^2)$. Then, for all $0 < \delta <1/2$ there exists $C > 0$ such that the pointwise bound
\begin{equation}\label{eq:tr1} \left\| [\chi_{(r,z)} , \omega_{N,t} ] \right\|_\text{tr} 
\leq C \, r^{\frac{3}{2} - 3\delta} \sum_{i=1}^3 \| \rho_{|[x_i, \omega_{N,t}]|} \|_1^{\frac{1}{6}+\delta} \left( \rho^*_{|[x_i, \omega_{N,t}]|} (z) \right)^{\frac{5}{6} - \delta}
\end{equation}
holds true. Here $\varrho^*_{|[x_i,\omega_{N,t}]|}$ denotes the Hardy-Littlewood maximal function defined by 
\begin{equation}\label{eq:max-def} \rho^*_{|[x_i , \omega_{N,t}]|} (z) = \sup_{B : z \in B} \frac{1}{|B|} \int_B dx\,\rho_{|[x_i , \omega_{N,t}]|} (x)  \end{equation}
with the supremum taken over all balls $B \in \bR^3$ such that $z \in B$. 
\end{lemma}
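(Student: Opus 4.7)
The plan is to reduce the commutator $[\chi_{r,z},\omega_{N,t}]$ to the commutators $[x_i,\omega_{N,t}]$, $i=1,2,3$, and then interpolate between a global trace-norm bound and a local bound controlled by the Hardy--Littlewood maximal function of $\rho_{|[x_i,\omega_{N,t}]|}$ at $z$.

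First, since the kernel of $[\chi_{r,z},\omega_{N,t}]$ is $(\chi_{r,z}(x)-\chi_{r,z}(y))\omega_{N,t}(x;y)$, I would apply the identity
\[
\chi_{r,z}(x)-\chi_{r,z}(y)=\sum_{i=1}^3(x_i-y_i)\int_0^1(\partial_i\chi_{r,z})(sx+(1-s)y)\,ds
\]
together with the observation that $(x_i-y_i)\omega_{N,t}(x;y)$ is the kernel of $[x_i,\omega_{N,t}]$ to write $[\chi_{r,z},\omega_{N,t}]=\sum_i\int_0^1 ds\,T_i^{(s)}$, where $T_i^{(s)}$ has kernel $(\partial_i\chi_{r,z})(sx+(1-s)y)\,[x_i,\omega_{N,t}](x;y)$. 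The Gaussian form gives the sharp pointwise bound $|\partial_i\chi_{r,z}(w)|\leq C r^{-1}\chi_{r,z}(w)^{1/2}$, which encodes the localisation of $\nabla\chi_{r,z}$ at scale $r$ around $z$.

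Second, the modulation $(\partial_i\chi_{r,z})(sx+(1-s)y)$ depends jointly on $x$ and $y$, so I would separate the two variables via the Fourier representation
\[
(\partial_i\chi_{r,z})(sx+(1-s)y)=\frac{i}{(2\pi)^3}\int dk\,k_i\,\widehat\chi_{r,z}(k)\,e^{iskx}\,e^{i(1-s)ky},
\]
obtaining
\[
T_i^{(s)}=\frac{i}{(2\pi)^3}\int dk\,k_i\,\widehat\chi_{r,z}(k)\,U_{sk}\,[x_i,\omega_{N,t}]\,U_{(1-s)k},
\]
where $U_k$ is the unitary multiplication by $e^{ikx}$.

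The heart of the argument is then an interpolation between two trace-norm estimates. On one side, the unitarity of $U_k$ gives the global bound $\|U_{sk}\,[x_i,\omega_{N,t}]\,U_{(1-s)k}\|_{\tr}=\|\rho_{|[x_i,\omega_{N,t}]|}\|_1$. On the other side, inserting Gaussian spatial cutoffs built from $\chi_{r,z}$ itself and using the elementary inequality
\[
\int\chi_{r,z}^\alpha(x)\,\rho_{|[x_i,\omega_{N,t}]|}(x)\,dx\leq C_\alpha\,r^3\,\rho^*_{|[x_i,\omega_{N,t}]|}(z),
\]
which follows from the definition of $\rho^*_{|[x_i,\omega_{N,t}]|}$ combined with a layer-cake representation of the Gaussian, yields a local bound. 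A H\"older interpolation with weight $\delta\in(0,1/2)$ between these two estimates produces the complementary exponents $\tfrac16+\delta$ and $\tfrac56-\delta$ in the statement, while the prefactor $r^{3/2-3\delta}$ results from combining $\|\nabla\chi_{r,z}\|_\infty\leq Cr^{-1}$ (entering the $k$-integral against $\widehat\chi_{r,z}$, which contributes a factor of order $r^{-1}$) with the volume scaling $r^{3(5/6-\delta)}$ coming from the local estimate.

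The main obstacle is the third step: one must orchestrate the Schatten-class H\"older inequalities so that (i) the $L^1$ norm $\|\rho_{|[x_i,\omega_{N,t}]|}\|_1$ and the maximal function $\rho^*_{|[x_i,\omega_{N,t}]|}(z)$ appear with the precise complementary exponents, (ii) the $k$- and $s$-integrals converge uniformly in $r$, and (iii) the overall prefactor comes out to be exactly $r^{3/2-3\delta}$. The auxiliary parameter $\delta\in(0,1/2)$ is introduced precisely to absorb borderline H\"older constants and to ensure that, when the estimate is later applied inside the generalised Fefferman--de~La~Llave representation of $1/|x-y|^\alpha$, the resulting $r$-integral remains convergent.
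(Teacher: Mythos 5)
Note first that the paper does not prove this lemma itself --- it defers to Lemma 3.1 of \cite{PRSS} --- so your proposal has to be measured against that argument. Your opening two steps are sound and coincide with the standard opening of that proof: the kernel identity, the reduction to $[x_i,\omega_{N,t}]$ via the fundamental theorem of calculus, and the Fourier separation are all correct, and they do yield the global bound $\|[\chi_{(r,z)},\omega_{N,t}]\|_{\tr}\le C r^{-1}\sum_i\|\rho_{|[x_i,\omega_{N,t}]|}\|_1$, since $\int |k_i|\,|\widehat{\chi}_{r,z}(k)|\,dk\sim r^{-1}$. The genuine gap is the third step, which you yourself flag as ``the main obstacle'' and never carry out --- and, worse, the decomposition you set up in steps one and two cannot deliver it. Once the whole Gaussian has been Fourier-expanded and the trace norm is taken under the $k$- and $s$-integrals, the only surviving $z$-dependence is the phase $e^{-ik\cdot z}$ in $\widehat{\chi}_{r,z}(k)$, which disappears when absolute values are taken; from that point on no quantity evaluated at $z$, in particular no maximal function $\rho^*_{|[x_i,\omega_{N,t}]|}(z)$, can appear. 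Your ``inserting Gaussian spatial cutoffs built from $\chi_{r,z}$ itself'' has nothing left to insert: to retain localization one must split the Gaussian \emph{before} Fourier-expanding, e.g. write $[\chi,\omega]=\chi^{1/2}[\chi^{1/2},\omega]+[\chi^{1/2},\omega]\chi^{1/2}$ and expand only the inner factor, and one then needs an operator-level device (not pointwise kernel estimates, which are useless for trace norms) to convert the retained cutoff into $\int\chi\,\rho_{|[x_i,\omega_{N,t}]|}\le C r^3\rho^*_{|[x_i,\omega_{N,t}]|}(z)$; the standard one is polar decomposition plus Cauchy--Schwarz, $\|g\,[x_i,\omega]\|_{\tr}\le\bigl(\int g^2\rho_{|[x_i,\omega]|}\bigr)^{1/2}\bigl(\tr|[x_i,\omega]|\bigr)^{1/2}$, which uses $|[x_i,\omega]^*|=|[x_i,\omega]|$. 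None of this is in your sketch.

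Even granting that repair, the exponent bookkeeping --- which is the actual content of the lemma --- is still missing. The symmetric splitting plus the Cauchy--Schwarz device gives only the ``half-local'' bound $C r^{1/2}\|\rho_{|[x_i,\omega_{N,t}]|}\|_1^{1/2}\,\rho^*_{|[x_i,\omega_{N,t}]|}(z)^{1/2}$, and no geometric mean of this with the global bound $Cr^{-1}\|\rho_{|[x_i,\omega_{N,t}]|}\|_1$ can produce an exponent larger than $1/2$ on the maximal function, whereas the statement requires $\tfrac56-\delta>\tfrac12$ in the relevant regime (in the application one needs $\delta<(3-2\alpha)/6\le 1/3$). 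The claimed bound is, in effect, the geometric mean with weights $\tfrac16+\delta$ and $\tfrac56-\delta$ of the global bound $r^{-1}\|\rho\|_1$ and an almost fully localized bound of size $r^{2}\rho^*(z)$ (note $-(\tfrac16+\delta)+2(\tfrac56-\delta)=\tfrac32-3\delta$), and obtaining that degree of two-sided localization for the commutator is exactly the delicate part of the argument in \cite{PRSS}; it cannot be waved through as ``a H\"older interpolation with weight $\delta$,'' and your description of $\delta$ as absorbing borderline constants misstates its role, since $\delta$ enters the exponents themselves. So the proposal is a plausible plan with the correct ingredients named, but the lemma's core estimate is not proved, and the route as written would have to be restructured before the missing step could even be attempted.
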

The proof of the above Lemma can be found in \cite{PRSS}.

%%%%%%%%%%%%%%%%%%%%%%%%%%%%%%%%%%%%%%%%%%%%%
%%%%%%%%%%%%%%%%%%%%%%%%%%%%%%%%%%%%%%%%%%%%%
\section{Control on the growth of fluctuations}

We consider $\xi_N\in\mathcal{F}$ a vector in the fermionic Fock space such that the number of particles in the state $\xi_N$ is bounded by a power of $N$, i.e.
$$\langle \xi_N,\mathcal{N}\xi_N \rangle\leq C\,N^\beta$$
for some $\beta\in[0,1)$ and a positive constant $C$. 
The time-evolution of $\xi_N$ in the Fock space is given by the action of the semigroup generated by the Hamiltonian \eqref{eq:H-FS}
$$\Psi_{N,t}=e^{-i\mathcal{H}_Nt/\e}\mathfrak{R}_{\omega_N}\xi_N.$$
In order to prove Proposition \ref{prop:thm2} (and thus Theorem \ref{thm:main}) we compare  the one-particle reduced density of $\Psi_{N,t}$ with the solution $\omega_{N,t}$ of the time-dependent Hartree-Fock equation \eqref{eq:HF}. To this end, we introduce the {\it fluctuation dynamics}
\be\label{eq:fluct}
\mathcal{U}_N(t)=\mathfrak{R}_{\omega_{N,t}}^*e^{-i\mathcal{H}_N t/\e}\mathfrak{R}_{\omega_N}
\ee 
so that 
$$\Psi_{N,t}=\mathfrak{R}_{\omega_{N,t}}\mathcal{U}_N(t)\xi_N,$$
where $\mathcal{U}_N(t)\xi_N\in\mathcal{F}$ describes the excitations of the Slater determinant at time $t>0$. The following Proposition ensures the boundedness of the expectation of the number of excitations of the Slater determinant in the state $\Psi_{N,t}$.

\begin{proposition}\label{prop:thm3}
Let $\omega_N$ be a fermionic operator such that $0\leq\omega_N\leq 1$, $\omega_N^2=\omega_N$ and $\tr\ \omega_N=N$. Assume that
\begin{itemize}
\item[i)] $\tr\ (-\e^2\Delta)\omega_N\leq CN$;
\item[ii)] there exist a time $T>0$ and a number $p>6/(3-2\alpha-6\delta)$ for some $\delta>0$, such that
$$\sup_{t\in[0,T]}\sum_{i=1}^3\left[ \|\rho_{|[x_i,\omega_{N,t}]|}\|_{L^1}+\|\rho_{|[x_i,\omega_{N,t}]|}\|_{L^{p}} \right]\leq C\,N\,\e$$
where $C$ is a positive constant. 
\end{itemize}  
Let $\mathcal{U}_N(t)$ be the fluctuation dynamics defined in \eqref{eq:fluct} and $\xi_N\in\mathcal{F}$, $\|\xi_N\|=1$. Then, for every $\eta>0$, there exists a positive constant $C$ such that 
\be\label{eq:est-fluct}
\sup_{t\in[0,T]}\langle \mathcal{U}_N(t)\,\xi_N,\mathcal{N}\,\mathcal{U}_N(t)\,\xi_N \rangle \leq C\left[ \langle \xi_N,\mathcal{N}\xi_N\rangle + N^{(3-2\alpha-6\delta)/(3-\alpha)} \right].
\ee
\end{proposition}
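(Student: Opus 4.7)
The plan is to differentiate $\langle \mathcal{U}_N(t)\xi_N, \mathcal{N}\,\mathcal{U}_N(t)\xi_N\rangle$ in time and close a Grönwall-type estimate. From the definition \eqref{eq:fluct},
\begin{equation*}
i\e\,\frac{d}{dt}\langle \mathcal{U}_N(t)\xi_N, \mathcal{N}\,\mathcal{U}_N(t)\xi_N\rangle = \langle \mathcal{U}_N(t)\xi_N, \mathfrak{R}^*_{\omega_{N,t}}[\mathcal{H}_N,\mathcal{N}]\mathfrak{R}_{\omega_{N,t}}\mathcal{U}_N(t)\xi_N\rangle.
\end{equation*}
The kinetic part of $\mathcal{H}_N$ commutes with $\mathcal{N}$, hence only the interaction contributes. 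I would then rewrite $1/|x-y|^\alpha$ via the generalized Fefferman-de La Llave formula as $c_\alpha\int_0^\infty \frac{dr}{r^{4+\alpha}}\int_{\R^3}dz\,\chi_{(r,z)}(x)\chi_{(r,z)}(y)$, so that the interaction is an $(r,z)$-integral of products of one-body Gaussian multiplication operators.

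Next I would conjugate by the Bogoliubov transformation $\mathfrak{R}_{\omega_{N,t}}$ using the particle-hole rule \eqref{eq:property-slater}. Expanding the resulting quartic expression in $u_t=1-\omega_{N,t}$ and $v_t$ and taking the commutator with $\mathcal{N}$, the four-operator terms reorganize so that, after some algebra and symmetrization in $x,y$, each surviving term contains a factor $[\chi_{(r,z)},\omega_{N,t}]$ that is paired with a bilinear $\int a^\#_x a^\#_y$-expression. Using the bounds of Lemma \ref{lemma:2quant}, these bilinears are controlled by $\|[\chi_{(r,z)},\omega_{N,t}]\|_{\mathrm{tr}}$ or $\|[\chi_{(r,z)},\omega_{N,t}]\|_{\mathrm{HS}}$ acting on $\mathcal{N}^{1/2}\mathcal{U}_N(t)\xi_N$ (HS bounds come for free since $\|A\|_{\mathrm{HS}}\leq\|A\|_{\mathrm{tr}}^{1/2}\|A\|^{1/2}$ with $\|[\chi,\omega]\|\leq 2$).

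Now insert Lemma \ref{lemma:traccia}: each $(r,z)$-integrand carries $r^{3/2-3\delta}$ times a product of powers of $\rho_{|[x_i,\omega_{N,t}]|}(z)$ and of its maximal function. Integrating in $z$ via Hölder and the Hardy-Littlewood maximal inequality on $L^p$ (endpoint issues at $L^1$ are harmless by the $L^1$-bound on $\rho_{|[x_i,\omega_{N,t}]|}$), the assumption $ii)$ gives a factor of order $N\e$. What remains is the $r$-integration against the weight $r^{-4-\alpha}$. The main obstacle is precisely this step: the integrand behaves like $r^{-5/2-\alpha-3\delta}$ at $r\to 0$, which is not integrable. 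I would therefore split $\int_0^\infty dr = \int_0^{r_0}+\int_{r_0}^\infty$: on the large-$r$ region use Lemma \ref{lemma:traccia} as above, on the small-$r$ region discard the commutator structure and bound $\chi_{(r,z)}$ in terms of its trivial $L^2$-mass, yielding a bound proportional to some power of $r_0$ times $\|\mathcal{N}^{1/2}\mathcal{U}_N(t)\xi_N\|^2$. Optimizing $r_0$ as a power of $N$ produces precisely the error $N^{(3-2\alpha-6\delta)/(3-\alpha)}$ and simultaneously fixes the constraint $p>6/(3-2\alpha-6\delta)$ that appears in the hypothesis.

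Putting these estimates together yields
\begin{equation*}
\left|\tfrac{d}{dt}\langle \mathcal{U}_N(t)\xi_N, \mathcal{N}\,\mathcal{U}_N(t)\xi_N\rangle\right|\leq C\langle \mathcal{U}_N(t)\xi_N, \mathcal{N}\,\mathcal{U}_N(t)\xi_N\rangle + C\,N^{(3-2\alpha-6\delta)/(3-\alpha)},
\end{equation*}
uniformly on $[0,T]$, and Grönwall's inequality then delivers \eqref{eq:est-fluct}. I expect the small-$r$/large-$r$ splitting and the balance between the maximal-function $L^p$-integrability and the singularity of the Fefferman-de La Llave weight to be the only delicate point; everything else is a straightforward adaptation of the Coulomb computation in \cite{PRSS}.
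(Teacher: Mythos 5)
Your overall skeleton (differentiate the expectation of $\mathcal{N}$ along the fluctuation dynamics, insert the Fefferman--de la Llave representation, use the commutator trace bound of Lemma \ref{lemma:traccia} plus the Hardy--Littlewood maximal inequality, close with Gr\"onwall) is the right one, but the step you yourself flag as delicate is handled incorrectly, and this is a genuine gap. First, your small-$r$ strategy --- ``discard the commutator structure and bound $\chi_{(r,z)}$ in terms of its trivial $L^2$-mass'' --- cannot work: integrating out $z$ (or the mass of the Gaussians) gains at most a factor $r^3$, which against the weight $r^{-4-\alpha}$ still leaves a non-integrable singularity $r^{-1-\alpha}$ at $r=0$. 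It is precisely in the small-$r$ region that the extra factor $r^{3/2-3\delta}$ coming from $\mathrm{tr}\,|[\chi_{(r,z)},\omega_{N,t}]|$, i.e.\ the semiclassical structure encoded in assumption $ii)$, is indispensable; in the actual proof the commutator bound is used on \emph{both} sides of every $r$-splitting, and the cruder bound (via $\|B_{r,z}\|^2$) is used only on the large-$r$ tail of one specific term.

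Second, you misidentify where the additive error $N^{(3-2\alpha-6\delta)/(3-\alpha)}$ comes from. If, as you claim, every region produced a bound proportional to $\|\mathcal{N}^{1/2}\,\mathcal{U}_N(t)\xi_N\|^2$ times powers of the cutoff, optimizing the cutoff could only change the Gr\"onwall constant and would never generate an additive, $\xi_N$-independent term --- yet such a term is unavoidable, since the quartic part of the generator contains a contribution with four annihilation operators (the term $III$ in \eqref{eq:gronwall-fluct}) that does not vanish on the vacuum. In that term two annihilators hit the hole sector, producing $\|v_{t,x}\|=\rho_t(x)^{1/2}$ instead of a number operator; after Cauchy--Schwarz this yields a bound of the form $\sqrt{N}\,\e\,k^{3/2-\alpha-3\delta}\|\mathcal{N}^{1/2}\mathcal{U}_N(t)\xi_N\|$ for $r\leq k$, to be balanced against $N\e^2k^{-\alpha-6\delta}$ for $r\geq k$, and the choice $k=\e^{1/(3-\alpha)}$ gives the stated error. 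Controlling the density factor requires $\|\rho_t\|_{L^{5/3}}\lesssim N$ uniformly in $t$, which is exactly where assumption $i)$ enters (Lieb--Thirring plus conservation of the Hartree--Fock energy, Appendix \ref{app:kinetic}); your proposal never uses assumption $i)$, so this piece is missing. Two smaller points: the constraint $p>6/(3-2\alpha-6\delta)$ is fixed by the H\"older/maximal-function step in the terms $I$ and $II$ (where the $r$-integral is split using two different H\"older pairs, not a cutoff producing an error), not by the optimization in $k$; and your first display omits the contribution of $\partial_t\mathfrak{R}_{\omega_{N,t}}$ --- the cancellation of all quadratic terms uses that $\omega_{N,t}$ solves \eqref{eq:HF}, so the reduction to the three quartic terms should be justified (or cited, as the paper does from \cite{BPS13}) rather than attributed to $[\mathcal{H}_N,\mathcal{N}]$ alone.
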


\begin{proof}
To bound \eqref{eq:est-fluct} we look for a Gr\"{o}nwall type estimate on the quantity which represents the expectation of excitations of the Slater determinant. 
We perform the time derivative of $\langle \mathcal{U}_N(t)\xi_N,\mathcal{N}\mathcal{U}_N(t)\xi_N\rangle$ and straightforward computations (see Proposition 3.3 in \cite{BPS13} for details) lead to
\be\label{eq:gronwall-fluct}
\begin{split}
i\,\e\,\partial_t\,\langle \mathcal{U}_N(t)\xi_N,&\mathcal{N}\mathcal{U}_N(t)\xi_N \rangle\\
=&\frac{4\,i}{N}\,\mathfrak{Im}\iint\frac{1}{|x-y|^\alpha}  \\
&\left\{ \langle \mathcal{U}_N(t)\xi_N,\,a^*(u_{t,x})\,a(\overline{v_{t,y}})\,a(u_{t,y})\,a(u_{t,x})\,\mathcal{U}_N(t)\xi_N \rangle  \right.\\
& \langle \mathcal{U}_N(t)\xi_N,\,a^*(u_{t,y})\,a^*(\overline{v_{t,y}})\,a^*(\overline{v_{t,x}})\,a(\overline{v_{t,x}})\,\mathcal{U}_N(t)\xi_N \rangle  \\
&\left. \langle \mathcal{U}_N(t)\xi_N,\,a(\overline{v_{t,x}})\,a(\overline{v_{t,y}})\,a(u_{t,y})\,a(u_{t,x})\,\mathcal{U}_N(t)\xi_N \rangle \right\}\,dx\,dy\\
=& I+II+III
\end{split}
\ee
where $u_{t,x}(z)=u_{N,t}(x;z)$, $v_{t,x}(z)=v_{N,t}(x;z)$,
\begin{equation*}
\begin{split}
I&=\frac{4\,i}{N}\,\mathfrak{Im}\iint\frac{1}{|x-y|^\alpha}\langle \mathcal{U}_N(t)\xi_N,\,a^*(u_{t,x})\,a(\overline{v_{t,y}})\,a(u_{t,y})\,a(u_{t,x})\,\mathcal{U}_N(t)\xi_N \rangle\,dx\,dy,\\
II&=  \frac{4\,i}{N}\,\mathfrak{Im}\iint\frac{1}{|x-y|^\alpha}  \langle \mathcal{U}_N(t)\xi_N,\,a^*(u_{t,y})\,a^*(\overline{v_{t,y}})\,a^*(\overline{v_{t,x}})\,a(\overline{v_{t,x}})\,\mathcal{U}_N(t)\xi_N \rangle\,dx\,dy,\\
III&= \frac{4\,i}{N}\,\mathfrak{Im}\iint\frac{1}{|x-y|^\alpha} \langle \mathcal{U}_N(t)\xi_N,\,a(\overline{v_{t,x}})\,a(\overline{v_{t,y}})\,a(u_{t,y})\,a(u_{t,x})\,\mathcal{U}_N(t)\xi_N \rangle  \,dx\,dy.
\end{split}
\end{equation*}
To bound the r.h.s. in \eqref{eq:gronwall-fluct} we make use of a smooth version of a generalization of Fefferman-de La Llave representation formula for radial potentials \cite{FDL}, \cite{HS}. In \cite{HS} an explicit expression for radial potentials which exhibit some decay at infinity is provided. In the context under consideration such a formula resumes in
\be\label{eq:FDLL}
\frac{1}{|x-y|^\alpha}=\frac{4}{\pi^2}\int_0^\infty \frac{dr}{r^{4+\alpha}}\int dz\,\chi_{(r,z)}(x)\,\chi_{(r,z)}(y)\,,
\ee 
where $\chi_{(r,z)}(x)=e^{-|x-z|^2/r^2}$. The advantage of this representation consists in the fact that the smooth part of the inverse power law potential represented by $\chi_{(r,z)}(\cdot)$ is decoupled from the singular part. In the formulation we are using, this representation is useful to isolate the commutator structure which is estimated in Lemma \ref{lemma:traccia} (see Eq.n \eqref{eq:B-bound} below).

Therefore, plugging \eqref{eq:FDLL} in $I$, we obtain
\begin{equation*}
\begin{split}
I&\leq \frac{C}{N}\int_0^\infty\frac{dr}{r^{4+\alpha}}\iiint \chi_{(r,z)}(x)\,\chi_{(r,z)}(y)\,\\
&\quad\quad\quad\quad\quad\quad\quad\quad\langle \mathcal{U}_N(t)\xi_N,a^*(u_{t,x})\,a(\overline{v}_{t,y})\,a(u_{t,y})\,a(u_{t,x})\,\mathcal{U}_N(t)\xi_N  \rangle\,dx\,dy\,dz\\
&=\frac{C}{N}\int_0^\infty\frac{dr}{r^{4+\alpha}}\iint \chi_{(r,z)}(x)\langle \mathcal{U}_N(t)\xi_N,a^*(u_{t,x})\,B_{r,z}\,a(u_{t,x})\,\mathcal{U}_N(t)\xi_N \rangle\,dx\,dz
\end{split}
\end{equation*}
where 
$$B_{r,z}=\int a(\overline{v_{t,y}})\chi_{(r,z)}(y)a(u_{t,y})\,dy=\iint (\overline{v}_{N,t}\chi_{(r,z)}u_{N,t})(s_1;s_2)\,a_{s_1}a_{s_2}\,ds_1\,ds_2.$$
Lemma \ref{lemma:2quant} and the fact that $u$ and $v$ are orthogonal yield
\be\label{eq:B-bound}\|B_{r,z}\|\leq 2\,\tr\,|\overline{v}_{N,t}\chi_{(r,z)}u_{N,t}|\leq 2\,\tr\,|[\chi_{(r,z)},\omega_{N,t}]|.\ee
Therefore, 
\begin{equation*}
I\leq \frac{C}{N}\int_0^\infty\frac{dr}{r^{4+\alpha}}\iint \chi_{(r,z)}(x)\,\tr\ |[\chi_{(r,z)},\omega_{N,t}]|\,\|a(u_{t,x})\,\mathcal{U}_N(t)\xi_N\|^2\,dx\,dz\,.
\end{equation*}
By Lemma \ref{lemma:traccia} we get
\begin{equation*}
\begin{split}
|I|&\leq C\frac{(N\e)^{\frac{1}{6}+\delta}}{N}\sum_{i=1}^3 \int_0^\infty \frac{dr}{r^{\frac{5}{2}+\alpha+3\delta}}\int b_{r,i}(x)\,\|a(u_{t,x})\mathcal{U}_{N}(t)\xi_N\|^2 \,dx\\
&\leq C\frac{(N\e)^{\frac{1}{6}+\delta}}{N}\sum_{i=1}^3 \int_0^\infty \frac{dr}{r^{\frac{5}{2}+\alpha+3\delta}}\langle \mathcal{U}_N(t)\xi_N,\,d\Gamma(u_{N,t}b_{r,i}(x)u_{N,t})\mathcal{U}_{N}(t)\xi_N\rangle
\end{split}
\end{equation*}
where $d\Gamma(u_{N,t}b_{r,i}(x)u_{N,t})$ is the second quantization of the operator $u_{N,t}b_{r,i}(x)u_{N,t}$ with integral kernel 
$$
(u_{N,t}b_{r,i}(x)u_{N,t})(s_2;s_2)=\int u_{N,t}(s_1;x)b_{r,i}(x)u_{N,t}(x;s_2)\,ds_1\,ds_2\,
$$
and $b_{r,i}$ is defined as
$$
b_{r,i}(x)=\int \chi_{(r,z)}(x)\left( \rho^*_{|[x_i,\omega_{N,t}]|}(z) \right)^{\frac{5}{6}-\delta}\,dz\,.
$$
Notice that $\|u_{N,t}\|\leq1$, thus Lemma \ref{lemma:2quant} yields
$$
|I|\leq C\frac{(N\e)^{\frac{1}{6}+\delta}}{N}\sum_{i=1}^3\int_0^\infty \frac{dr}{r^{\frac{5}{2}+\alpha+3\delta}}\|b_{r,i}\|_{L^\infty}\|\mathcal{N}^{1/2}\mathcal{U}_N(t)\xi_N\|^2\,.
$$
Hardy-Littlewood maximal inequality then implies
\be\label{eq:HLM}
\|b_{r,i}\|_{L^\infty}\leq r^{\frac{3}{p}}\|\rho^*_{|[x_i,\omega_{N,t}]|}\|^{\frac{5}{6}-\delta}_{L^{\left(\frac{5}{6}-\delta\right)q}}\leq C\,r^{\frac{3}{p}}\|\rho_{|[x_i,\omega_{N,t}]|} \|^{\frac{5}{6}-\delta}_{L^{\left(\frac{5}{6}-\delta\right)q}}
\ee
where $p,\ q$ are conjugated H\"{o}lder exponents coupled by the relation $\frac{1}{p}+\frac{1}{q}=1$, with the constraint $\frac{6}{5-6\delta}< q<\infty$ in order to ensure $q>1$ so that the last inequality in the r.s.h. of Eq.n \eqref{eq:HLM} holds true. 

Now, we split the integral in the $r$ variable into two parts: let $k>0$ be a fixed positive number, then
\begin{equation*}
\begin{split}
|I|\leq C\frac{(N\e)^{\frac{1}{6}+\delta}}{N}\sum_{i=1}^3&\ \left[\int_0^k \frac{dr}{r^{\frac{5}{2}+\alpha+3\delta-\frac{3}{p}}}\,\|\rho_{|[x_i,\omega_{N,t}]|} \|^{\frac{5}{6}-\delta}_{L^{\left(\frac{5}{6}-\delta\right)q}}  \|\mathcal{N}^{1/2}\mathcal{U}_N(t)\xi_N\|^2\right.\\
&+ \left.\int_k^\infty \frac{dr}{r^{\frac{5}{2}+\alpha+3\delta-\frac{3}{p'}}}\,\|\rho_{|[x_i,\omega_{N,t}]|} \|^{\frac{5}{6}-\delta}_{L^{\left(\frac{5}{6}-\delta\right)q'}} \|\mathcal{N}^{1/2}\mathcal{U}_N(t)\xi_N\|^2\right]
\end{split}
\end{equation*}
where $(p,q)$ and $(p',q')$ are chosen to guarantee integrability of the integral in the $r$ variable, namely
$$
p<\frac{6}{3+2\alpha+6\delta},\quad\quad q>\frac{6}{3-2\alpha-6\delta},
$$ 
$$
p'>\frac{6}{3+2\alpha+6\delta},\quad\quad q'<\frac{6}{3-2\alpha-6\delta}.
$$
With these choices, using hypothesis $ii)$ in Proposition \ref{prop:thm3} we obtain for every $t\in [0,T]$
\be\label{eq:bound-I}
|I|\leq C\e\|\mathcal{N}^{1/2}\mathcal{U}_N(t)\xi_N\|^2=C\e\langle \mathcal{U}_N(t)\xi_N,\,\mathcal{N}\mathcal{U}_N(t)\xi_N \rangle.
\ee

The second term on the r.h.s. of \eqref{eq:gronwall-fluct} can be handled analogously:
\begin{equation*}
\begin{split}
II&\leq \frac{C}{N}\iint \frac{1}{|x-y|^\alpha}\langle \mathcal{U}_N(t)\xi_N,a^*(\overline{v}_{t,x})\,a^*(u_{t,x})\,a^*(\overline{v}_{t,y})\,a(\overline{v}_{t,x})\mathcal{U}_N(t)\xi_N \rangle\,dx\,dy\\
&=\frac{C}{N}\int_0^\infty \frac{dr}{r^{4+\alpha}}\iiint \chi_{(r,z)}(x)\chi_{(r,z)}(y)\\
&\quad\quad\quad\quad\quad\quad\quad\langle \mathcal{U}_N(t)\xi_N,a^*(\overline{v}_{t,x})\,a^*(u_{t,x})\,a^*(\overline{v}_{t,y})\,a(\overline{v}_{t,x})\mathcal{U}_N(t)\xi_N \rangle\,dz\,dx\,dy\\
&=\frac{C}{N}\int dx \int_0^\infty \frac{dr}{r^{4+\alpha}}\int dz \chi_{(r,z)}(x)\langle \mathcal{U}_N(t)\xi_N,a^*(\overline{v}_{t,x})\,B^*_{r,z}\,a(\overline{v}_{t,x})\mathcal{U}_N(t)\xi_N \rangle\\
&\leq C\frac{(N\e)^{\frac{1}{6}+\delta}}{N}\sum_{i=1}^3\int_0^\infty\frac{dr}{r^{\frac{5}{2}+\alpha+3\delta}}\iint dx\,dz\,\chi_{(r,z)}(x)\,\left( \rho^*_{|[x_i,\omega_{N,t}]|}(z)\right)^{\frac{5}{6}-\delta}\\
&\quad\quad\quad\quad\quad\quad\quad\quad\quad\quad\quad\quad\quad\quad\quad\quad\quad\quad\quad\quad\quad\quad\quad\quad\quad\|a(\overline{v}_{t,x})\mathcal{U}_N(t)\xi_N\|^2\\
&\leq C\frac{(N\e)^{\frac{1}{6}+\delta}}{N}\sum_{i=1}^3\int_0^\infty \frac{dr}{r^{\frac{5}{2}+\alpha|3\delta}}\langle \mathcal{U}_N(t)\xi_N,\,d\Gamma(\overline{v}_{N,t}\,b_{r,i}(x)\,\overline{v}_{N,t})\mathcal{U}_N(t)\xi_N \rangle\\
&\leq C\frac{(N\e)^{\frac{1}{6}+\delta}}{N}\sum_{i=1}^3 \int_0^\infty \frac{dr}{r^{\frac{5}{2}+\alpha+3\delta}}\|b_{r,i}\|_{L^\infty}\|\mathcal{N}^{1/2}\mathcal{U}_N(t)\xi_N\|^2 
\end{split}
\end{equation*}
and we conclude as in \eqref{eq:bound-I}
\be\label{eq:bound-II}
|II|\leq C\e\|\mathcal{N}^{1/2}\mathcal{U}_N(t)\xi_N\|^2=C\e\langle \mathcal{U}_N(t)\xi_N,\,\mathcal{N}\mathcal{U}_N(t)\xi_N \rangle.
\ee
In order to close the Gr\"{o}nwall estimate we need to bound $III$. This term, together with the initial quantity $\langle\xi_N,\mathcal{N}\xi_N\rangle$, determines the function of $N$ which bounds the expectation of the number of fluctuations.
To deal with $III$, we use again Eq.n \eqref{eq:FDLL} to get
\begin{equation*}
\begin{split}
III&\leq \frac{C}{N}\int_0^\infty \frac{dr}{r^{4+\alpha}}\iiint \chi_{(r,z)}(x)\chi_{(r,z)}(y)\\
&\quad\quad\quad\quad\quad\quad\quad\quad\quad\langle\mathcal{U}_N(t)\xi_N, a(\overline{v}_{t,x})\,a(\overline{v}_{t,y})\,a(u_{t,y})\,a(u_{t,x})\mathcal{U}_N(t)\xi_N\rangle\,dz\,dx\,dy\,.
\end{split}
\end{equation*}
We fix $k>0$, to be chosen later as a $\e$ dependent function, and divide $III$ into two parts: 
\begin{equation*}
\begin{split}
III_1&= \frac{C}{N}\int_0^k\frac{dr}{r^{4+\alpha}}\iiint \chi_{(r,z)}(x)\chi_{(r,z)}(y)\\
&\quad\quad\quad\quad\quad\quad\quad\quad\quad\langle \mathcal{U}_N(t)\xi_N,a(\overline{v}_{t,x})\,a(\overline{v}_{t,y})\,a(u_{t,y})\,a(u_{t,x})\mathcal{U}_N(t)\xi_N\rangle\,dz\,dx\,dy  \\
III_2&= \frac{C}{N}\int_k^\infty\frac{dr}{r^{4+\alpha}}\iiint \chi_{(r,z)}(x)\chi_{(r,z)}(y)\\
&\quad\quad\quad\quad\quad\quad\quad\quad\quad\langle \mathcal{U}_N(t)\xi_N,a(\overline{v}_{t,x})\,a(\overline{v}_{t,y})\,a(u_{t,y})\,a(u_{t,x})\mathcal{U}_N(t)\xi_N\rangle\,dz\,dx\,dy.
\end{split}
\end{equation*}
As for the term $III_1$, we observe that the integral in the $z$ variable chancels part of the singularity in the $r$ variable by producing a factor $r^3$, i.e.
\be\label{eq:z-int}\int \chi_{(r,z)}(x)\chi_{(r,z)}(y)\,dz=r^3\chi_{(\sqrt{2}r,x)}(y).\ee
Plugging \eqref{eq:z-int} into $III_1$ we obtain
\begin{equation*}
\begin{split}
III_1&=\frac{C}{N}\int_0^k\frac{dr}{r^{1+\alpha}}\int \langle \mathcal{U}_N(t)\xi_N, B_{\sqrt{2}r,x}\,a(\overline{v}_{t,x})\,a(u_{t,x})\mathcal{U}_N(t)\xi_N\rangle\,dx\\
&\leq\frac{C}{N}\int_0^k\frac{dr}{r^{1+\alpha}}\int \rho_t(x)^{1/2}\|B_{\sqrt{2}r,x}\|\,\|a(u_{t,x})\mathcal{U}_N(t)\xi_N\|\,dx\\
&\leq \frac{C}{N}\int_0^k\frac{dr}{r^{1+\alpha}}\int \rho_t(x)^{1/2}\tr\ |[\chi_{(\sqrt{2}r,x)},\omega_{N,t}]|\,\|a(u_{t,x})\mathcal{U}_N(t)\xi_N\|\,dx,
\end{split}
\end{equation*}
where we have used  
$$\|\overline{v}_{N,t}\|^2=\omega_{N,t}(x;x)=\rho_t(x)$$
in the first inequality
and \eqref{eq:B-bound} in the last inequality. Lemma \ref{lemma:traccia}, together with hypothesis $ii)$,  yields
 \be\label{eq:III-1}
 \begin{split}
 |III_1|&\leq C\frac{(N\e)^{\frac{1}{6}+\delta}}{N}\sum_{i=1}^3\int_0^k dr\,r^{\frac{1}{2}-\alpha-3\delta}\int dx\,\rho_t(x)^{1/2}\sum_{i=1}^3 \|\rho_{|[x_i,\omega_{N,t}]|}\|_{L^1}^{\frac{1}{6}+\delta}\left(\rho^*_{|[x_i,\omega_{N,t}]|}(x) \right)^{\frac{5}{6}-\delta}\\
 &\leq C\frac{(N\e)^{\frac{1}{6}+\delta}}{N}k^{\frac{3}{2}-\alpha-3\delta}\int \rho_t(x)^{1/2}\left(\rho^*_{|[x_i,\omega_{N,t}]|}(x)\right)^{\frac{5}{6}-\delta}\|a(u_{t,x})\mathcal{U}_N(t)\xi_N\|\,dx
 \end{split}
 \ee
 H\"{o}lder inequality and hypothesis $ii)$ imply
 \begin{equation*}
 \begin{split}
 |III_1|&\leq C\frac{(N\e)^{\frac{1}{6}+\delta}}{N}k^{\frac{3}{2}-\alpha-3\delta}\|\rho_t\|^{\frac{1}{2}}_{L^{\frac{5}{3}}}\sum_{i=1}^3\|\rho^*_{|[x_i,\omega_{N,t}]|}\|_{L^{\frac{25}{6}-5\delta}}^{\frac{5}{6}-\delta}\left(\int \|a(u_{t,x})\mathcal{U}_N(t)\xi_N\|^2\,dx\right)^{\frac{1}{2}}\\
 &\leq C\frac{(N\e)}{N}k^{\frac{3}{2}-\alpha-3\delta}\|\rho_t\|^{\frac{1}{2}}_{L^{\frac{5}{3}}}\langle \mathcal{U}_N(t)\xi_N,d\Gamma(u_{N,t})\mathcal{U}_N(t)\xi_N\rangle\\
 &\leq C\,\e\,k^{\frac{3}{2}-\alpha-3\delta}\|\rho_t\|^{\frac{1}{2}}_{L^{\frac{5}{3}}}\|\mathcal{N}^{1/2}\mathcal{U}_N(t)\xi_N\|^2\,.
 \end{split}
 \end{equation*}
Moreover, the $L^{5/3}$ norm of $\rho_t$ can be bounded in terms of the initial data by standard kinetic energy inequality that we report in  Appendix \ref{app:kinetic} for completeness. 

Thus by hypothesis $i)$, for every $t\in[0,T]$,
\be\label{eq:bound-III1}
\begin{split}
|III_1|&\leq C\,\sqrt{N}\,\e\,k^{\frac{3}{2}-\alpha-3\delta}\|\mathcal{N}^{1/2}\mathcal{U}_N(t)\xi_N\|\\
&\leq \e\|\mathcal{N}^{1/2}\mathcal{U}_N(t)\xi_N\|^2+C\,N\,\e\,k^{3-2\alpha-6\delta}.
\end{split}
\ee
  
As for the second term in $III$, we notice that 
\begin{equation*}
\begin{split}
|III_2|&\leq \frac{C}{N}\int_k^\infty\frac{dr}{r^{4+\alpha}}\int \|B_{r,z}\|^2\,dz\\
&\leq \frac{C}{N}\int_k^\infty\frac{dr}{r^{4+\alpha}}\int \tr\ |[\chi_{(r,z)},\omega_{N,t}]|\,dz\\
&\leq C\frac{(N\e)^2}{N}\int_k^\infty\frac{dr}{r^{1+\alpha+6\delta}}
\end{split}
\end{equation*}
The term on the r.h.s. is integrable being $\alpha> 0$
 and $\delta\in(0,1/2)$, thus
 \be\label{eq:bound-III2}
 |III_2|\leq C\,N\,\e^2\,k^{-\alpha-6\delta}.
 \ee
Combining the bounds \eqref{eq:bound-III1} and \eqref{eq:bound-III2}, we obtain
$$
|III|\leq \e\|\mathcal{N}^{1/2}\mathcal{U}_N(t)\xi_N\|^2+C\,N\,\e\,k^{3-2\alpha-6\delta}+C\,N\,\e^2\,k^{-\alpha-6\delta}.
$$
By optimizing in $k$, we get 
$$
k=\e^{1/(3-\alpha)}
$$
and therefore $III$ is bounded by
\be\label{eq:bound-III}
|III_3|\leq \e\|\mathcal{N}^{1/2}\mathcal{U}_N(t)\xi_N\|^2+C\,N\,\e^{3(2-\alpha-2\delta)/(3-\alpha)}
\ee
Eq.ns \eqref{eq:bound-I}, \eqref{eq:bound-II} and \eqref{eq:bound-III} lead to a control on the growth of fluctuations quantified by the following Gr\"{o}nwall inequality
$$
\left| \frac{d}{dt}\langle \mathcal{U}_N(t)\xi_N,\mathcal{N}\mathcal{U}_N(t)\xi_N \rangle \right| \leq C\langle \mathcal{U}_N(t)\xi_N,\mathcal{N}\mathcal{U}_N(t)\xi_N \rangle+ C\,N\,\e^{(3-2\alpha-6\delta)/(3-\alpha)}
$$
that implies, for every $t\in[0,T]$,
$$
\langle \mathcal{U}_N(t)\xi_N,\mathcal{N}\mathcal{U}_N(t)\xi_N\rangle\leq C\left[ \langle\xi_N,\mathcal{N}\xi_N\rangle + N\e^{(3-2\alpha-6\delta)/(3-\alpha)} \right]
$$
and the Proposition is proved. 
\end{proof}

%%%%%%%%%%%%%%%%%%%%%%%%%%%%%%%%%%%%%%%%%%%%%
%%%%%%%%%%%%%%%%%%%%%%%%%%%%%%%%%%%%%%%%%\

\appendix
\section{Kinetic energy estimates}\label{app:kinetic}

To bound the $L^{5/3}$ norm of the density $\rho_t$, we observe that Lieb-Thirring inequality and the positivity of the interaction potential yield
$$\|\rho_t\|_{L^{5/3}}^{5/3}\leq \tr\ (-\Delta)\omega_{N,t}\leq \e^{-2}\mathcal{E}_{\rm HF}(\omega_{N,t})$$
where $\mathcal{E}_{\rm HF}$
is the Hartree-Fock energy functional defined in \eqref{eq:E-HF}. Conservation of energy implies 
$$\|\rho_t\|_{L^{5/3}}^{5/3}\leq \e^{-2}\mathcal{E}_{\rm HF}(\omega_{N,t})=\e^{-2}\mathcal{E}_{\rm HF}(\omega_{N}).$$
To close the estimate using the assumption on the kinetic energy of the initial sequence $\tr\ (-\e^2\Delta)\omega_{N}\leq C\,N$, we observe that the potential energy can be bounded by the kinetic energy. Indeed, Hardy-Littlewood-Sobolev inequality yiels
\begin{equation*}
\frac{1}{N}\int \frac{1}{|x-y|^\alpha}\rho_0(x)\rho_0(y)\,dx\,dy\leq \frac{C}{N}\|\rho_0\|^2_{L^{6/(5-\alpha)}}
\end{equation*}
By interpolation, using that $\frac{6}{6-\alpha}\in\left(1,\frac{5}{3}\right)$ and  $\|\rho_0\|_{L^1}=N$, we have
\begin{equation*}
\begin{split}
\frac{1}{N}\int \frac{1}{|x-y|^\alpha}\rho_0(x)\rho_0(y)\,dx\,dy&\leq\frac{C}{N}\|\rho_0\|_{L^1}^{\frac{12-5\alpha}{6}}\|\rho_0\|_{L^{\frac{5}{3}}}^{\frac{5}{6}\alpha} \\
&\leq C\,N^{1-\frac{5}{6}\alpha}\|\rho_0\|_{L^{\frac{5}{3}}}^{\frac{5}{6}\alpha}\\
&\leq C\,N+N^{-\frac{2}{3}\alpha}\|\rho_0\|^{\frac{5}{3}\alpha}_{L^{\frac{5}{3}}},
\end{split}
\end{equation*}
where in the last line we have used Young inequality 
$ab\leq \frac{a^p}{p}+\frac{b^q}{q}$, $p^{-1}+q^{-1}=1$,
on the quantities $a=N^{1-\frac{\alpha}{2}}$ and $b=N^{-\frac{\alpha}{3}}\|\rho_0\|_{L^{\frac{5}{3}}}^{\frac{5}{6}\alpha}$ with $p=2/(2-\alpha)$ and $q=2/\alpha$. Thus, applying again Lieb-Thirring inequality and recalling that $\e=N^{-1/3}$, we obtain
\begin{equation*}
\frac{1}{N}\int \frac{1}{|x-y|^\alpha}\rho_0(x)\rho_0(y)\,dx\,dy\leq C\,N+C\,\tr\ (-\e^2\Delta)\omega_N
\end{equation*}
that is bounded by
\begin{equation*}
\frac{1}{N}\int \frac{1}{|x-y|^\alpha}\rho_0(x)\rho_0(y)\,dx\,dy\leq C\,N
\end{equation*}
thanks to assumption $i)$. 
\bigskip

{\bf Acknowledgement.}
The author is supported by the grant SNSF Ambizione S-71119-02-01.

\end{document}